\newtheorem{theorem}{Theorem}
\title{\LARGE \bf
Compositionality of Linearly Solvable Optimal Control in Networked Multi-Agent Systems 
}
 \author{Lin Song$^{1}$, Neng Wan$^{1}$, Aditya Gahlawat$^{1}$, Naira Hovakimyan$^{1}$, and Evangelos A. Theodorou$^{2}$
   \thanks{$^{1}$Lin Song, Neng Wan, Aditya Gahlawat and Naira Hovakimyan are with the Department of Mechanical Science and Engineering, University of Illinois at Urbana-Champaign, Urbana, IL 61801 USA {\tt\footnotesize \{linsong2, nengwan2, gahlawat, nhovakim\}@illinois.edu}}%
   \thanks{$^{2}$Evangelos A. Theodorou is with the Department of Aerospace Engineering, Georgia Institute of Technology, Atlanta, GA 30332 USA {\tt\footnotesize \{evangelos.theodorou\}@gatech.edu}}
 }
\begin{document}

\maketitle
\thispagestyle{empty}
\pagestyle{empty}

\begin{abstract}

In this paper, we discuss the methodology of generalizing the optimal control law from learned component tasks to unlearned composite tasks on  Multi-Agent Systems (MASs), by using the linearity composition principle of linearly solvable optimal control (LSOC) problems. The proposed approach achieves both the compositionality and optimality of control actions simultaneously within the cooperative MAS framework in both discrete and continuous-time in a sample-efficient manner, which reduces the burden of 
re-computation of the optimal control solutions for the new task on the MASs. We investigate the application of the proposed approach on the MAS with coordination between agents. The experiments show feasible results in investigated scenarios, including both discrete and continuous dynamical systems for task generalization without resampling.

\end{abstract}

\section{INTRODUCTION}

Stochastic optimal control problems \cite{kumar2015stochastic}, located at the intersection of Reinforcement Learning (RL) \cite{sutton2018reinforcement} and control theory, attracted a wide community of researchers over the last few years. However, they are difficult and computationally expensive to solve for high dimensional problems \cite{blondel2000survey}. To overcome the challenges in computational efficiency, some approximation-based approaches were introduced to achieve optimal control solutions, such as cost parameterization \cite{bertsekas1995neuro}, path integral formulation \cite{theodorou2010generalized, theodorou2010reinforcement}, value function approximation \cite{powell2011review} and policy approximation \cite{sutton2000policy}. In 
\cite{todorov2009efficient}, an exponential transformation was applied to represent the cost function in a form of Kullback–Leibler (KL) divergence between probability distributions, such that the optimal solution was obtained in a linear form. Furthermore, the linearly solvable optimal control
(LSOC) problem was generally formulated in \cite{dvijotham2013linearly} to summarize the class of optimal control problems whose solutions can be obtained by solving a reduced linear equation. The LSOC problems allow for control composition and path-integral representation of the optimal solution \cite{dvijotham2011unified}. The compositionality of LSOC can also improve the computational efficiency by generalizing existing controllers towards constructing new controllers for solving a general class of problems according to the contribution of each component problem  \cite{todorov2009compositionality,pan2015sample}. The composite controller can be obtained by immediate computation 
via weighing on existing controllers, and generalizing to complex problems in a certain class without resampling. However, the aforementioned work on the compositionality of LSOC only considers single-agent problems, and the compositionality of the optimal control actions in networked MASs is seldomly discussed in prior work.

Networked multi-agent systems allow team coordination and provide flexibility in many application scenarios, including robotics \cite{8960572}, sensor networks \cite{zheng2018average} and transportation systems \cite{liu2017intelligent}, etc. Theoretical foundations of MASs research include the framework of Decentralized Partially Observable Markov Decision Process (Dec-POMDP) in  \cite{amato2013decentralized} for finding an optimal policy for MASs, and Distributed Constraint Optimization Problems (DCOP) in \cite{maheswaran2004taking} for handling the coordination of multiple agents. Though broadly applied, MAS 
problems are typically intractable to solve and scale. In \cite{chades2002heuristic}, the author proposed factored representations as extension of Dec-POMDP for scalability. 
The Decentralized Partially Observable Semi-Markov Decision Process (Dec-POSMDP) framework  was proposed in \cite{omidshafiei2017decentralized} for efficient computation in large scale discrete- and continuous-time problems as an extended version of Dec-POMDP. Meanwhile, some RL techniques were integrated in the MAS framework for achieving coordination and optimality simultaneously \cite{busoniu2008comprehensive} from the perspective of machine learning
\cite{stone2000multiagent,zhang2018fully,zhang2019multi} and game theory \cite{lanctot2017unified,bredin2000game}. 

This paper presents integration of compositionality into the optimal control framework of networked MASs; numerical examples are provided validating the proposed approach in both discrete and continuous time.
This paper is organized as follows: Section \ref{2_sec} formulates the control problem in networked MAS setting; Section \ref{3_sec} investigates the main results on achieving both compositionality and optimality in MASs and the task generalization capacity; Section \ref{4_sec} provides numerical examples validating the proposed approach; the conclusion and future work are discussed in Section \ref{5_sec}.

\section{PROBLEM FORMULATION}\label{2_sec}
\subsection{Stochastic Optimal Control of Single-Agent Systems}
\subsubsection{Discrete-time Systems}
For the discrete-time single-agent problems, also known as Markov Decision Processes (MDPs), the passive dynamics are defined by the transition probability, i.e. $x'_i \sim p_i(\cdot | x_i)$
 and the controlled dynamics $x'_i \sim u_i(\cdot|x_i) = p_i(\cdot|x_i,u_i)$ for agent $i$.

The running cost for the agent $i$ includes state cost $q_i(\cdot) \ge 0$ and action cost, defined via the KL divergence between the controlled and passive dynamics. This way, the running cost is in the form of 
$
    c_i(x_i,u_i) = q_i(x_i) + \textrm{KL}(u(\cdot|x_i)\|p_i(\cdot|x_i)) 
    = q_i(x_i)+\mathbb{E}_{x'_i \sim u(\cdot|x_i)}\Big[\textrm{log} \frac{u(x'_i|x_i)}{p(x'_i|x_i)}\Big].
$

Let $t_f$ denote the terminal time, and $\mathcal{I}_i,\mathcal{B}_i$ represent sets of interior and boundary states, respectively (i.e. $x_{i} \in \mathcal{I}_i, x_i^{t_f} \in \mathcal{B}_i$). The cost-to-go function for first-exit total cost setting introduced in \cite{todorov2009efficient} calculates the expected cumulative cost starting from state $x_i^{t_0}$ and time $t_0$ under control law $u_i$ in the form of $
    J_i^{u_i}(x_i^{t_0},t_0) = \mathbb{E}^{u_i}\big[\phi_i(x_i^{t_f})+\sum\nolimits_{\tau = t_0}^{t_f-1}c_i(x_i^{\tau},u_i^{\tau})\big],
$
with $\phi_i(x_i)$ denoting the final cost function. In this formulation, the terminal time $t_f$ is determined online, when the target state $x_i^{t_f} \in \mathcal{B}_i$ is first reached. 

For discrete-time stochastic optimal control problem, the goal is to obtain the optimal policy $u_i^*$ such that the dynamical system starting from the state $x_i^{t_0}$ and time $t_0$ behaves optimally thereafter, and thus the value function defined as the optimal cost-to-go function by $V_i(x_i) = \min{u_i}J_i^{u_i}(x_i^{t_0},t_0)$ satisfies the Bellman equation:
\begin{align}\label{eq:bellman_discrete_single}
    V_i(x_i) &= \min_{u_i}\{c_i(x_i,u_i)+\mathbb{E}_{x'_i\sim u_i(\cdot|x_i)}[V_i(x'_i)]\}.
\end{align}  

With the exponential transformation provided by the desirability function $Z_i(x_i) = \textrm{exp}(-V_i(x_i))$ for agent $i$, the optimization problem in \eqref{eq:bellman_discrete_single} has a linear solution in the form of \begin{align}\label{eq:dis_single_opt}
    u_i^*(x'_i|x_i) &= \frac{p(x'_i|x_i)Z_i(x'_i)}{\mathbf{G}[Z_i](x_i)} =  \frac{p(x'_i|x_i)Z_i(x'_i)}{\sum_{x'_i}p(x'_i|x_i)Z_i(x'_i)}.
\end{align}   

The Bellman equation in \eqref{eq:bellman_discrete_single} is thus reduced to a linear equation of the form \begin{equation}\label{eq:bellman_dis}
    \textrm{exp}(c_i(x_i))Z_i(x_i) = \mathbf{G}[Z_i](x_i),
\end{equation}
under the optimal control law $u_i^*$.


\subsubsection{Continuous-time Systems}
For the continuous-time dynamical system of agent $i$, the dynamics are usually described as an It$\hat{\textrm{o}}$ diffusion process: \begin{align}
    dx_i &= g_i(x_i,t)dt+B_i(x_i)[u_i(x_i,t)dt+\sigma_i d\omega_i] \label{eq:cont_dyn}\\
    &= f_i(x_i,u_i,t)dt+F_i(x_i,u_i)d\omega_i.
\nonumber
\end{align}

Similar to the discrete-time scenario, the immediate cost penalizes in both state and action, in the form of $
    c_i(x_i,u_i) = q_i(x_i) + \frac{1}{2}u_i(x_i,t)^\top R_i u_i(x_i,t),
$
with $R_i$ being a positive definite matrix. The cost-to-go function calculates the expected cumulative cost starting from state $x_i$ under control $u_i$. Through defining the value function $V_i$ as the optimal cost-to-go function and a stochastic second-order differentiator $\mathcal{L}_{(u_i)}[V_i] = f_i^\top \nabla_{x_i}V_i + \frac{1}{2}\textrm{trace}(F_i F_i^\top \nabla^2_{x_ix_i}V_i)$, the stochastic Hamilton-Jacobian-Bellman (HJB) equation takes the form of
$
    0 = \min_{u_i}\{c_i(x_i,u_i) + \mathcal{L}_{(u_i)}[V_i](x_i)\}.
$

Defining the desirability function  $Z_i(x_i,t) = \textrm{exp}(-V_i(x_i,t))$, and under the nonlinearity cancellation conditions, the optimal control law in continuous-time is reduced to \begin{equation}\label{eq:cont_single_opt}
    u_i^*(x_i,t) = \sigma_i \sigma_i^\top  B_i(x_i)^\top \frac{\nabla_{x_i}Z_{i}(x_i,t)}{Z_i(x_i,t)},
\end{equation} 
satisfying the transformed HJB equation in the form of \begin{equation}\label{eq:hjb_cont}
    0 = \mathcal{L}[Z_i]-q_iZ_i,
\end{equation}
where $\sigma_i$ and $B_i$ depict the continuous dynamics as in \eqref{eq:cont_dyn}.

\subsubsection{Composition Theory for Linearly Solvable Optimal Control} 
Considering the optimal control law takes the form of \eqref{eq:dis_single_opt} and \eqref{eq:cont_single_opt} and based on the linearity of equations, we can develop the compositionality of optimal control laws.

Assume there are $F$ component problems governed by identical dynamics, cost rates,   set of interior and boundary states, and differed at the final cost. For agent $i$, let $Z_i^{\{f\}}(x_i)$ denote the desirability function in the component problem $f$, $u_{i}^*{^{\{f\}}}$ denote the corresponding optimal control law and $\omega_i^{\{f\}}$ denote the weight of the problem $f$. The desirability function satisfies the following weighted combination relationship between the component problems and the composite problem on the boundary states:\begin{equation}
    \label{eq:des_relation}
    Z_{i}(x_i) =  \sum\nolimits_{f = 1}^{F} \omega_i^{\{f\}}Z_{i}^{\{f\}}(x_i),
\end{equation}
which is equivalent to the following relationship on the final cost $h$: \begin{equation}\label{eq:final_cost_relation}
    h_{i}(x_i) = -\log( \sum\nolimits_{f=1}^{F}\omega_i^{\{f\}} \textrm{exp}(-h_i^{\{f\}}(x_i))).
\end{equation}

Since the desirability function $Z_i$ solves a linear equation by nature (\eqref{eq:bellman_dis} in discrete-time and \eqref{eq:hjb_cont} in continuous-time), once condition \eqref{eq:des_relation} holds on the boundary, it holds everywhere, and the compositionality of the optimal control laws follows in a task generalization setting. This formulation creates a straightforward application scenario in the physical world. For example, a UAV may deliver packages to different warehouses with different terminal costs. 

In discrete time, the compositionality of desirability function directly applies to the control law in equation \eqref{eq:dis_single_opt}, and we have $
    u_i^*(\cdot|x_i) =  \sum\nolimits_{f=1}^{F}\frac{\omega_f Z_i^{\{f\}}(x_i)}{\sum\nolimits_{e=1}^{F}\omega_e Z_i^{\{e\}}(x_i)} u_i^{*^{\{f\}}}(\cdot|x_i),
$
where $u_i^*(x'_i|x_i)$ denotes the transition probability from $x_i$ to $x'_i$ under control.

In continuous time, the desirability functions between component and composite problems satisfy the following relationship for all the states:
\begin{equation}\label{eq:des_relation_cont}
    Z_{i}(x_i, t) =  \sum\nolimits_{f = 1}^{F} \omega_i^{\{f\}}Z_{i}^{\{f\}}(x_i,t),
\end{equation}
and the corresponding compositionality on control actions becomes
\begin{equation}
    u_i^*(x_i,t) = \sum\nolimits_{f=1}^{F}\frac{\omega_i^{\{f\}}Z_i^{\{f\}}(x_i,t)}{\sum\nolimits_{e=1}^{F} \omega_i^{\{e\}}Z_i^{\{e\}}(x_i,t)} u_i^{*^{\{f\}}}(x_i,t).
\end{equation}

\subsection{Stochastic Optimal Control of MASs}
\subsubsection{MASs and Factorial Subsystems}
We use a connected and undirected graph $\mathcal{G} = \{\mathcal{V,E}\}$ to describe the networked MAS, where $v_i \in \mathcal{V}$ denotes the graph vertex representing agent $i$, and $e_{ij} \in \mathcal{E}$ represents the edge connecting agents $i$ and $j$, enabling bidirectional information exchange. Thus, the networked MAS is factorized into multiple subsystems (i.e. $\bar{\mathcal{N}}_j$), each of which includes a central agent $\{j\}$ and index sets of its neighboring agents $\mathcal{N}_
j$. The notation considers both existing coordination between agents and the complexity in computation of optimal control laws. An illustrative example of a MAS and factorial subsystems is provided in Figure \ref{fig_fact}. Here, $\bar{x}_i$ denotes the joint states for the factorial subsystem $i$, $\overline{\mathcal{I}}_i,\overline{\mathcal{B}}_i$ represent the sets of interior and boundary states for the factorial subsystem $i$, respectively (i.e. $\bar{x}_{i} \in \overline{\mathcal{I}_i}, \bar{x}_{i}^{t_f} \in \overline{\mathcal{B}}_i$). 
\begin{figure}[h!]
\centering
\vspace{-1mm}
\includegraphics[width=2.0in]{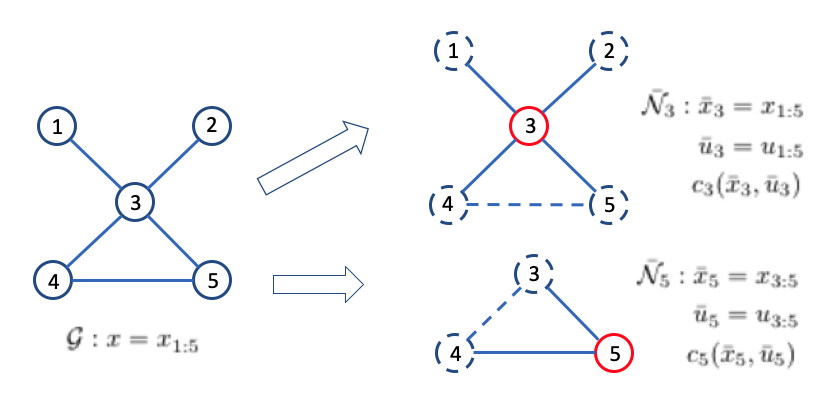}
\caption{MAS $\mathcal{G}$ and factorial subsystems $\bar{\mathcal{N}}_3$ and $\bar{\mathcal{N}}_5$.}
\label{fig_fact}
\end{figure}

For the cooperative control framework, we calculate the local control action $u_j$ depending on the local observation of agent $j$ within the factorial subsystem $\bar{\mathcal{N}}_j$, where $j$ is the central agent. In this representation, dependency on irrelevant states is removed and exponential computational complexity on global states is avoided. More detailed interpretations on our distributed framework can be found in \cite{wan2021distributed}.

In discrete time, the uncontrolled dynamics for the factorial subsystem $i$ can be interpreted as:
\begin{equation}\label{joint_dynamics_discrete}
  \bar{x}'_i \sim \bar{p}_i(\cdot|\bar{x}_i) = \prod\nolimits_{j \in \bar{\mathcal{N}}_i} p_j(\cdot|x_j),
\end{equation}
where $\bar{p}_i(\bar{x}'_i|\bar{x}_i)$ denotes the transition probability from factorial states $\bar{x}_i$ to $\bar{x}'_i$. The running cost follows a similar definition as in the single-agent problem in the form of
\begin{align}
    c_i(\bar{x}_i,\bar{u}_i) &= q_i(\bar{x}_i) + \textrm{KL}(\bar{u}_i(\cdot|\bar{x}_i) || \bar{p}_i(\cdot|\bar{x}_i)) \nonumber\\
    &= q_i(\bar{x}_i) +\sum\nolimits_{j \in \bar{\mathcal{N}}_i} \textrm{KL}(u_j(\cdot|\bar{x}_i) || p_j(\cdot|x_j)) \label{joint_cost_rate_discrete}.
\end{align}

In continuous time, the uncontrolled dynamics for the factorial subsystem $i$ can be represented by the diffusion process:
\begin{equation}\label{joint_dynamics_continuous}
    d\bar{x}_i(\tau) = \bar{f}_i(\bar{x}_i,\tau)d\tau + \bar{B}_i(\bar{x}_i)\bar{\sigma}_i\cdot d\bar{\omega}_i(\tau).
\end{equation}
The running cost is defined correspondingly in the form of
\begin{equation}\label{joint_cost_continuous}
    c_i(\bar{x}_i,\bar{u}_i) = q_i(\bar{x}_i) + \frac{1}{2}\bar{u}_i(\bar{x}_i,t)^\top \bar{R}_i \bar{u}_i(\bar{x}_i,t).
\end{equation}

\subsubsection{Joint Optimal Control Actions in Factorial Subsystems}\label{sec_to_ref}
In discrete time, let  $Z_i(\bar{x}_i) = \textrm{exp}(-V_i(\bar{x}_i))$ define the desirability function for the factorial subsystem $i$. Then we have  the joint optimal control action \begin{equation}\label{eq:joint_action_discrete}
    \bar{u}_i^{*}({\bar{x}'_i}|\bar{x}_i) = \frac{\bar{p}_i(\bar{x}'_i|\bar{x}_i)Z_{i}(\bar{x}'_i)}{\sum\nolimits_{\bar{x}'_i} \bar{p}_i(\bar{x}'_i|\bar{x}_i)Z_{i}(\bar{x}'_i)},
\end{equation}
where $\bar{u}_i({\bar{x}'_i}|\bar{x}_i)$ denotes the transition probability from factorial subsystem states $\bar{x}_i$ to $\bar{x}'_i$ in the controlled dynamics.
According to the factorization architecture, the local control action can be obtained immediately by marginalization, and we have $
    u_i^*(x'_i|\bar{x}_i) = \sum\nolimits_{j\in \mathcal{N}_i} \bar{u}_i^{*}(x'_i,x'_{j\in \mathcal{N}_i}|\bar{x}_i).
$

In continuous time, 
define $Z_i(\bar{x}_i,t) = \textrm{exp}[-V_i(\bar{x}_i,t)/\lambda_i]$ as the desirability function of the factorial subsystem $i$ dependent on the joint states $\bar{x}_i$, where $\lambda_i \in \mathbb{R}$ is a scalar. Then the joint optimal control action can be obtained as
\begin{equation}\label{eq:joint_action_continuous}
    \bar{u}_i^{*}(\bar{x}_i, t) = \bar{\sigma}_i \bar{\sigma}_i^\top \bar{B}_i^\top (\bar{x}_i) \frac{\nabla_{\bar{x}_i} Z_i(\bar{x}_i,t)}{Z_i(\bar{x}_i,t)}.
\end{equation}
Similarly, the local control action can be obtained by sampling on the marginal distribution of joint optimal control action in continuous-time.

\section{MAIN RESULTS}\label{3_sec}
The compositionality of optimal control laws can be extended to networked MASs in a cooperative control framework, and the main results are formulated through the following two theorems.
\begin{theorem}\label{theorem_dis}
(Discrete-time MAS compositionality) 
For $F$ multi-agent LSOC problems in discrete-time governed by the same dynamics \eqref{joint_dynamics_discrete}, the running cost rates \eqref{joint_cost_rate_discrete}, and the set of interior states $\overline{\mathcal{I}}_i$, but various terminal costs denoted by $h_i^{\{f\}}$,  let $\bar{x}_i^{d^{\{f\}}}$ denote the state targets of the component problem $f$  and $\bar{x}_i^d$ denote the state target for the new task (composite problem) for the factorial subsystem $i$.  Define the  weights \begin{equation}\label{eq:discrete_weight}
    \bar{\omega}_i^{\{f\}} = \textrm{exp}(-\frac{1}{2}(\bar{x}_i^{d}-\bar{x}_i^{d^{\{f\}}})^\top \mathbf{P}(\bar{x}_i^{d}-\bar{x}_i^{d^{\{f\}}})) ,
\end{equation}
with $\mathbf{P}$ being a positive definite diagonal matrix representing the kernel widths. The terminal cost for the new task becomes \begin{equation}\label{thm1_final_cost_composition}
    h_i(\bar{x}_{i}^{t_f}) = -\log(\sum\nolimits_{f=1}^F \tilde{\omega}_i^{\{f\}} exp(-h_i^{\{f\}}(\bar{x}_{i}^{t_f}))),
\end{equation} where $\bar{x}^{t_f}$ denotes the boundary states, and the coefficients $\tilde{\omega}_i^{\{f\}} = \frac{\bar{\omega}_i^{\{f\}}}{\sum_{f=1}^{F}\bar{\omega}_i^{\{f\}}} $ can be interpreted as probability weights. The control law solving the new problem is obtained by composition of the existing controllers
$
    \bar{u}_i^{*}({\bar{x}'_i}|\bar{x}_i) = \sum\nolimits_{f=1}^{F}\bar{W}_i^{\{f\}}(\bar{x}_i)\bar{u}_i^{*^{\{f\}}}(\bar{x}'_i|\bar{x}_i),
$
with 
$
    \bar{W}_i^{\{f\}}(\bar{x}_i) = \frac{\tilde{\omega}_i^{\{f\}}\mathcal{H}_i^{\{f\}}(\bar{x}_i)}{\sum\nolimits_{e=1}^{F}\tilde{\omega}_i^{\{e\}}\mathcal{H}_i^{\{e\}}(\bar{x}_i)}$ and $
\mathcal{H}_i^{\{f\}}(\bar{x}_i) = \sum\nolimits_{\bar{x}'_i}\bar{p}_i(\bar{x}'_i|\bar{x}_i) Z_i^{\{f\}}(\bar{x}'_i).
$
\end{theorem}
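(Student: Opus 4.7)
The plan is to exploit the fact that the exponential transformation in \eqref{eq:bellman_dis} renders the multi-agent Bellman equation \emph{linear} in the desirability $Z_i$, so that a weighted sum of component desirabilities is itself the desirability of a composite problem whose terminal cost is the correspondingly log-weighted-average terminal cost. Concretely, I would first exponentiate the prescribed composite terminal cost \eqref{thm1_final_cost_composition} on the boundary $\overline{\mathcal{B}}_i$ and use $Z_i^{\{f\}}(\bar{x}_i^{t_f}) = \exp(-h_i^{\{f\}}(\bar{x}_i^{t_f}))$ for each component, obtaining
\begin{equation*}
Z_i(\bar{x}_i^{t_f}) \;=\; \exp(-h_i(\bar{x}_i^{t_f})) \;=\; \sum_{f=1}^F \tilde{\omega}_i^{\{f\}}\, Z_i^{\{f\}}(\bar{x}_i^{t_f}).
\end{equation*}
This establishes the weighted-sum identity for $Z_i$ on $\overline{\mathcal{B}}_i$. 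The specific kernel form \eqref{eq:discrete_weight} of the $\bar{\omega}_i^{\{f\}}$ plays no further role; it only records how target proximity is turned into a choice of weights.

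Next I would extend the identity from the boundary to all interior states. Because the $F$ component problems share the dynamics $\bar{p}_i$, running cost rate $c_i$, and interior set $\overline{\mathcal{I}}_i$, each $Z_i^{\{f\}}$ satisfies the \emph{same} linear equation $\exp(c_i(\bar{x}_i))Z_i^{\{f\}}(\bar{x}_i) = \sum_{\bar{x}'_i}\bar{p}_i(\bar{x}'_i|\bar{x}_i)Z_i^{\{f\}}(\bar{x}'_i)$ on $\overline{\mathcal{I}}_i$. Linearity in $Z_i^{\{f\}}$ shows that the fixed combination $\sum_f \tilde{\omega}_i^{\{f\}} Z_i^{\{f\}}$ also satisfies this equation on $\overline{\mathcal{I}}_i$, with the boundary data computed in the previous step. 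Uniqueness of the solution — which follows from the spectral-radius bound recorded below \eqref{eq:bellman_dis} — then forces
\begin{equation*}
Z_i(\bar{x}_i) \;=\; \sum_{f=1}^F \tilde{\omega}_i^{\{f\}}\, Z_i^{\{f\}}(\bar{x}_i) \qquad \textrm{for all } \bar{x}_i.
\end{equation*}

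With the composite desirability in hand, I would substitute into the joint optimal control formula \eqref{eq:joint_action_discrete}. The numerator $\bar{p}_i(\bar{x}'_i|\bar{x}_i)\,Z_i(\bar{x}'_i)$ becomes $\sum_f \tilde{\omega}_i^{\{f\}}\, \bar{p}_i(\bar{x}'_i|\bar{x}_i)\, Z_i^{\{f\}}(\bar{x}'_i)$, which I would rewrite as $\sum_f \tilde{\omega}_i^{\{f\}}\, \mathcal{H}_i^{\{f\}}(\bar{x}_i)\, \bar{u}_i^{*^{\{f\}}}(\bar{x}'_i|\bar{x}_i)$ by reading \eqref{eq:joint_action_discrete} backwards for each component. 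The denominator $\sum_{\bar{x}'_i}\bar{p}_i(\bar{x}'_i|\bar{x}_i)Z_i(\bar{x}'_i)$ likewise collapses to $\sum_e \tilde{\omega}_i^{\{e\}}\, \mathcal{H}_i^{\{e\}}(\bar{x}_i)$. Regrouping the ratio gives exactly $\bar{u}_i^{*}(\bar{x}'_i|\bar{x}_i) = \sum_f \bar{W}_i^{\{f\}}(\bar{x}_i)\, \bar{u}_i^{*^{\{f\}}}(\bar{x}'_i|\bar{x}_i)$ with the stated $\bar{W}_i^{\{f\}}$.

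The main obstacle is the interior-propagation step: one must justify that the weighted sum $\sum_f \tilde{\omega}_i^{\{f\}} Z_i^{\{f\}}$ is truly the desirability of the composite LSOC problem defined by \eqref{thm1_final_cost_composition}, rather than merely a formal expression that happens to agree with it on $\overline{\mathcal{B}}_i$. The linearity of the transformed Bellman operator, together with the uniqueness granted by the strictly sub-unit spectral radius of the associated transition submatrix, supplies this. Everything else — normalizing $\bar{\omega}_i^{\{f\}}$ into $\tilde{\omega}_i^{\{f\}}$, which leaves the ratio defining $\bar{u}_i^{*}$ unchanged, and the algebraic regrouping in the final step — is routine.
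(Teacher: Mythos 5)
Your proposal follows essentially the same route as the paper's proof: exponentiate the composite terminal cost to get the weighted-sum identity for $Z_i$ on the boundary, propagate it to the interior via linearity of the transformed Bellman equation, and substitute into \eqref{eq:joint_action_discrete} with the $\mathcal{H}_i^{\{f\}}$ normalization to regroup into the stated convex combination of component controllers. The only difference is that you explicitly invoke uniqueness via the sub-unit spectral radius to justify the boundary-to-interior step, which the paper asserts more informally; this is a welcome tightening rather than a different argument.
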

\begin{proof} 
From the composition on the final cost relation in equation \eqref{thm1_final_cost_composition}, and the exponential transformation given by the desirability function, we have \begin{equation}\label{eq:thm1_des_relation_boundary}
    Z_{i}(\bar{x}_{i}^{t_f}) = \sum\nolimits_{f = 1}^{F} \tilde{\omega}_i^{\{f\}}Z_{i}^{\{f\}}(\bar{x}_{i}^{t_f}).
\end{equation}

Furthermore, considering that the desirability function $ Z_{i}(\bar{x}_{i})$ solves a linear PDE, as long as the solution holds on the boundary states, the linear combination of the solutions holds everywhere. Thus, we have  \begin{equation}\label{eq:thm1_des_relation_interior}
    Z_{i}(\bar{x}_{i}) = \sum\nolimits_{f = 1}^{F} \tilde{\omega}_i^{\{f\}}Z_{i}^{\{f\}}(\bar{x}_{i}).
\end{equation} 

For the composite problem, the joint optimal control actions introduced in equation \eqref{eq:joint_action_discrete} can thus be reduced by the  compositionality of the desirability functions:
\begin{align}
    &\quad \bar{u}_i^{*}({\bar{x}'_i}|\bar{x}_i) = \frac{\bar{p}_i(\bar{x}'_i|\bar{x}_i)Z_{i}(\bar{x}'_i)}{\sum\nolimits_{\bar{x}'_i} \bar{p}_i(\bar{x}'_i|\bar{x}_i)Z_{i}(\bar{x}'_i)} \nonumber \\
    &= \frac{\bar{p}_i(\bar{x}'_i|\bar{x}_i)\sum\nolimits_{f=1}^{F} \tilde{\omega}_i^{\{f\}}Z_i^{\{f\}}(\bar{x}'_i)}{\sum\nolimits_{\bar{x}'_i} \bar{p}_i(\bar{x}'_i|\bar{x}_i)\sum\nolimits_{e=1}^{F} \tilde{\omega}_i^{\{e\}}Z_{i}^{\{e\}}(\bar{x}'_i)} \nonumber\\
    &= \frac{\sum\nolimits_{f=1}^{F}\tilde{\omega}_i^{\{f\}}\bar{p}_i(\bar{x}'_i|\bar{x}_i) Z_i^{\{f\}}(\bar{x}'_i)}{\sum\nolimits_{e=1}^{F} \tilde{\omega}_i^{\{e\}}\Big[\sum\nolimits_{\bar{x}'_i} \bar{p}_i(\bar{x}'_i|\bar{x}_i)Z_i^{\{e\}}(\bar{x}'_i)\Big]} \nonumber\\
    &= \sum\nolimits_{f=1}^{F}{\frac{\tilde{\omega}_i^{\{f\}}\bar{p}_i(\bar{x}'_i|\bar{x}_i) Z_i^{\{f\}}(\bar{x}'_i)}{\sum\nolimits_{e=1}^{F} \tilde{\omega}_i^{\{e\}}\Big[\sum\nolimits_{\bar{x}'_i} \bar{p}_i(\bar{x}'_i|\bar{x}_i)Z_i^{\{e\}}(\bar{x}'_i)\Big]}} \label{eq:line4}.
    \end{align}
    
    The terms in the denominator and numerator of \eqref{eq:line4} can be multiplied by a normalization term $\sum\nolimits_{\bar{x}'_i} \bar{p}_i(\bar{x}'_i|\bar{x}_i)Z_i^{\{f\}}(\bar{x}'_i)$ simultaneously, and after term reordering, the equation is reduced to
    \begin{align}
    \bar{u}_i^{*}({\bar{x}'_i}|\bar{x}_i) 
    &= \sum\nolimits_{f=1}^{F}\frac{\tilde{\omega}_i^{\{f\}}\mathcal{H}_i^{\{f\}}(\bar{x}_i)}{\sum\nolimits_{e=1}^{F}\tilde{\omega}_i^{\{e\}}\mathcal{H}_i^{\{e\}}(\bar{x}_i)}\bar{u}_i^{*^{\{f\}}}(\bar{x}'_i|\bar{x}_i)\\
    &= \sum\nolimits_{f=1}^{F}\bar{W}_i^{\{f\}}(\bar{x}_i)\bar{u}_i^{*^{\{f\}}}(\bar{x}'_i|\bar{x}_i),
\end{align}
with 
\begin{align*}
    \bar{W}_i^{\{f\}}(\bar{x}_i) &= \frac{\tilde{\omega}_i^{\{f\}}\mathcal{H}_i^{\{f\}}(\bar{x}_i)}{\sum\nolimits_{e=1}^{F}\tilde{\omega}_i^{\{e\}}\mathcal{H}_i^{\{e\}}(\bar{x}_i)},\\
\mathcal{H}_i^{\{f\}}(\bar{x}_i) &= \sum\nolimits_{\bar{x}'_i}\bar{p}_i(\bar{x}'_i|\bar{x}_i) Z_i^{\{f\}}(\bar{x}'_i).
\end{align*} 
\end{proof}
\begin{theorem}\label{theorem_cont}
(Continuous-time MAS compositionality) For $F$ multi-agent LSOC problems in continuous-time governed by the same dynamics \eqref{joint_dynamics_continuous}, running cost rates \eqref{joint_cost_continuous}, and the set of interior states $\overline{\mathcal{I}}_i$, but various terminal costs denoted by $h_i^{\{f\}}$, let $\bar{x}_i^{d^{\{f\}}}$ denote the state targets of the component problem $f$ and $\bar{x}_i^d$ denote the state target for the new task (composite problem)  for the factorial subsystem $i$.  Define the weights \begin{equation}\label{eq:continuous_weight}
    \bar{\omega}_i^{\{f\}} = \textrm{exp}(-\frac{1}{2}(\bar{x}_i^{d}-\bar{x}_i^{d^{\{f\}}})^\top \mathbf{P}(\bar{x}_i^{d}-\bar{x}_i^{d^{\{f\}}})),
\end{equation}
with $\mathbf{P}$ being a positive definite diagonal matrix representing the kernel widths. The terminal cost for the new task becomes \begin{equation}\label{thm2_final_cost_composition}
    h_i(\bar{x}_{i}^{t_f}) = -\log(\sum\nolimits_{f=1}^F \tilde{\omega}_i^{\{f\}} exp(-h_i^{\{f\}}(\bar{x}_{i}^{t_f}))),
\end{equation} where $\bar{x}^{t_f}$ denotes the boundary states, and $\tilde{\omega}_i^{\{f\}} = \frac{\bar{\omega}_i^{\{f\}}}{\sum_{f=1}^{F}\bar{\omega}_i^{\{f\}}} $ can be interpreted as the probability weights. The control law solving the new problem is obtained by composition of the existing controllers
$
    \bar{u}_i^{*}(\bar{x}_i, t) = \sum\nolimits_{f=1}^{F}\bar{W}_i^{\{f\}}(\bar{x}_i,t)\bar{u}_i^{*^{\{f\}}}(\bar{x}_i,t),
$
with $
    \bar{W}_i^{\{f\}}(\bar{x}_i,t) = \frac{\tilde{\omega}_i^{\{f\}}Z_i^{\{f\}}(\bar{x}_i,t)}{\sum\nolimits_{e=1}^{F} \tilde{\omega}_i^{\{e\}}Z_i^{\{e\}}(\bar{x}_i,t)}.
$
\end{theorem}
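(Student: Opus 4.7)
The plan is to mirror the structure of the discrete-time proof of Theorem \ref{theorem_dis}, but replace the sum-normalization argument with a gradient-and-logarithmic-derivative manipulation that exploits the closed-form continuous-time optimal control \eqref{eq:joint_action_continuous}. First I would establish the boundary identity: applying the exponential/desirability transformation to the composite terminal cost \eqref{thm2_final_cost_composition} gives
\begin{equation*}
    Z_i(\bar{x}_i^{t_f},t_f) \;=\; \sum\nolimits_{f=1}^{F}\tilde{\omega}_i^{\{f\}}\,Z_i^{\{f\}}(\bar{x}_i^{t_f},t_f),
\end{equation*}
exactly as in equation \eqref{eq:thm1_des_relation_boundary}.

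Next I would propagate this identity into the interior. Because all $F$ component problems share the dynamics \eqref{joint_dynamics_continuous}, the running cost \eqref{joint_cost_continuous}, and the interior set $\overline{\mathcal{I}}_i$, the desirability functions $Z_i^{\{f\}}(\bar{x}_i,t)$ each satisfy one and the same linear PDE (the multi-agent analogue of the transformed HJB equation \eqref{eq:hjb_cont}). By linearity/superposition and uniqueness of the solution to this boundary-value problem, the weighted combination that matches on $\overline{\mathcal{B}}_i$ must also hold on $\overline{\mathcal{I}}_i$ for all $t$, yielding
\begin{equation*}
    Z_i(\bar{x}_i,t) \;=\; \sum\nolimits_{f=1}^{F}\tilde{\omega}_i^{\{f\}}\,Z_i^{\{f\}}(\bar{x}_i,t),
\end{equation*}
which is precisely the assumed relation \eqref{eq:des_relation_cont}.

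The final step is a direct substitution into \eqref{eq:joint_action_continuous}. Using linearity of $\nabla_{\bar{x}_i}$, I would write
\begin{equation*}
    \nabla_{\bar{x}_i}Z_i(\bar{x}_i,t) \;=\; \sum\nolimits_{f=1}^{F}\tilde{\omega}_i^{\{f\}}\,\nabla_{\bar{x}_i}Z_i^{\{f\}}(\bar{x}_i,t),
\end{equation*}
then multiply and divide each term in the resulting sum by $Z_i^{\{f\}}(\bar{x}_i,t)$ so that the factor $\bar{\sigma}_i\bar{\sigma}_i^\top \bar{B}_i^\top(\bar{x}_i)\,\nabla_{\bar{x}_i}Z_i^{\{f\}}/Z_i^{\{f\}}$ reproduces the component control $\bar{u}_i^{*^{\{f\}}}(\bar{x}_i,t)$ from \eqref{eq:joint_action_continuous} applied to problem $f$. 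The residual scalar factor collapses into $\bar{W}_i^{\{f\}}(\bar{x}_i,t)=\tilde{\omega}_i^{\{f\}}Z_i^{\{f\}}/\sum_e\tilde{\omega}_i^{\{e\}}Z_i^{\{e\}}$, giving the claimed composition formula.

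The main obstacle is the interior-propagation step: the linear-PDE argument requires that all component problems share an identical linear operator, so I would emphasize that the factorized passive dynamics $\bar{f}_i,\bar{B}_i,\bar{\sigma}_i$ and the state-cost $q_i(\bar{x}_i)$ coincide across $f$, making the transformed HJB operator acting on $Z_i^{\{f\}}$ the same for every $f$. Once this is granted, the rest is essentially the continuous-time counterpart of the algebraic manipulation in \eqref{eq:line4}, with the gradient replacing the sum over next states; no subtle technicalities beyond linearity of $\nabla$ and of expectation-free substitution remain.
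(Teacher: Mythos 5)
Your proposal is correct and follows essentially the same route as the paper's proof: boundary identity from the exponentiated composite terminal cost, propagation to the interior via linearity of the shared transformed HJB operator, then substitution into \eqref{eq:joint_action_continuous} using linearity of the gradient and the multiply-and-divide-by-$Z_i^{\{f\}}$ trick to recover the weights $\bar{W}_i^{\{f\}}$. If anything, you make the interior-propagation step more explicit than the paper does (it merely cites ``similar reasoning as in the discrete-time case''), which is a welcome clarification rather than a deviation.
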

\begin{proof}
From composition of the final cost function in continuous-time given by equation \eqref{thm2_final_cost_composition}, and similar reasoning as in the discrete-time case, we have the compositionality of desirability functions holding  everywhere: \begin{equation}\label{eq:thm2_des_relation_interior}
    Z_{i}(\bar{x}_{i},t) = \sum\nolimits_{f = 1}^{F} \tilde{\omega}_i^{\{f\}}Z_{i}^{\{f\}}(\bar{x}_{i},t).
\end{equation} 

For the composite problem, the joint optimal control actions, introduced in equation \eqref{eq:joint_action_continuous}, can thus be reduced to
\begin{align*}
    &\quad \bar{u}_i^{*}(\bar{x}_i, t) = \bar{\sigma}_i \bar{\sigma}_i^\top \bar{B}_i^\top (\bar{x}_i) \frac{\nabla_{\bar{x}_i} Z_i(\bar{x}_i,t)}{Z_i(\bar{x}_i,t)}\\
    &= \bar{\sigma}_i \bar{\sigma}_i^\top \bar{B}_i^\top (\bar{x}_i) \frac{\nabla_{\bar{x}_i} \Big[{\sum\nolimits_{f=1}^{F} \tilde{\omega}_i^{\{f\}}Z_i^{\{f\}}(\bar{x}_i,t)}\Big]}{\sum\nolimits_{f=1}^{F} \tilde{\omega}_i^{\{f\}}Z_i^{\{f\}}(\bar{x}_i,t)}\\
    &= \frac{\sum\nolimits_{f=1}^{F}\bar{\sigma}_i \bar{\sigma}_i^\top \bar{B}_i^\top (\bar{x}_i)\nabla_{\bar{x}_i} \Big[\tilde{\omega}_i^{\{f\}}Z_i^{\{f\}}(\bar{x}_i,t)\Big]}{\sum\nolimits_{e=1}^{F} \tilde{\omega}_i^{\{e\}}Z_i^{\{e\}}(\bar{x}_i,t)}\\
    &= \frac{\sum\nolimits_{f=1}^{F}\bar{\sigma}_i \bar{\sigma}_i^\top \bar{B}_i^\top (\bar{x}_i)Z_i^{\{f\}}(\bar{x}_i,t)\nabla_{\bar{x}_i} \Big[\tilde{\omega}_i^{\{f\}}Z_i^{\{f\}}(\bar{x}_i,t)\Big]}{\sum\nolimits_{e=1}^{F} \tilde{\omega}_i^{\{e\}}Z_i^{\{e\}}(\bar{x}_i,t)Z_i^{\{f\}}(\bar{x}_i,t)}\\
    &= \sum\limits_{f=1}^{F}{\frac{\tilde{\omega}_i^{\{f\}}Z_i^{\{f\}}(\bar{x}_i,t)}{\sum\limits_{e=1}^{F} \tilde{\omega}_i^{\{e\}}Z_i^{\{e\}}(\bar{x}_i,t)}\bar{\sigma}_i \bar{\sigma}_i^\top \bar{B}_i^\top (\bar{x}_i)\frac{\nabla_{\bar{x}_i} Z_i^{\{f\}}(\bar{x}_i,t)}{Z_i^{\{f\}}(\bar{x}_i,t)}}\\
    &= \sum\nolimits_{f=1}^{F}\bar{W}_i^{\{f\}}(\bar{x}_i,t)\bar{u}_i^{*^{\{f\}}}(\bar{x}_i,t),
\end{align*}
with \begin{equation*}
    \bar{W}_i^{\{f\}}(\bar{x}_i,t) = \frac{\tilde{\omega}_i^{\{f\}}Z_i^{\{f\}}(\bar{x}_i,t)}{\sum\nolimits_{e=1}^{F} \tilde{\omega}_i^{\{e\}}Z_i^{\{e\}}(\bar{x}_i,t)}. 
\end{equation*}
\end{proof}
\section{SIMULATION RESULTS}\label{4_sec}

For demonstration of the proposed results, we validate the method on a cooperative UAV team governed by probability distribution in discrete-time as \eqref{joint_dynamics_discrete} and stochastic dynamics modeled by the It$\hat{\textrm{o}}$ diffusion process in continuous-time as \eqref{joint_dynamics_continuous}. In the team, agent 1 and agent 2 are closely coordinated by distance minimization using the running cost, while agent 3 is loosely connected only by the terminal cost. The networked architecture is illustrated in Figure \ref{fig_uav_team}.
\begin{figure}[h!]
\centering
\includegraphics[width=1.3in]{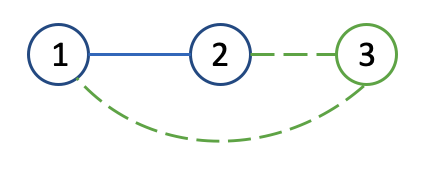}
\vspace*{-1mm}
\caption{Communication of the UAV team with agent 1 and agent 2 tightly-connected through running cost and agent 3 loosely-connected through the terminal cost.}
\label{fig_uav_team}
\end{figure}

\vspace*{-2mm}
\subsection{Discrete-time example of task generalization}
For discrete-time systems, we consider a cooperative UAV team described by the probability distribution model. The states are described by cells within a $5\times5$ grid, and the passive dynamics can be explained by the influence of random wind. We also consider the obstacle states (shaded region in Figures \ref{fig:comp_discrete}), which are penalized by larger state-related costs. The position of UAV $i$ is described by a state vector $x_i = [r_i, c_i]^\top$, where $r_i$ and $c_i \in \{1,2,3,4,5\}$. The goal of the cooperative UAV team is to achieve coordination between UAV 1 and UAV 2, by minimizing the distance between them for simulating the teamwork, and allow UAV 3 to fly to the target state individually under the controlled dynamics. Meanwhile, all the three UAVs are tasked with collision avoidance and cost minimization for optimal performance (i.e. travel time minimization).

For a factorization given by Figure \ref{fig_uav_team}, we have $
    \bar{x}_1 =[x_1;x_2]^\top,
    \bar{x}_2 = [x_1;x_2;x_3]^\top,
    \bar{x}_3 = [x_2;x_3]^\top.
$
Here, UAV 1 and UAV 2 are designed to fly cooperatively by getting close with each other, and the cost functions corresponding to the factorial subsystem states $\bar{x}_1$ and $\bar{x}_2$ are thus defined to contain terms penalizing on the row coordinate difference ($|r_2-r_1|$) and column coordinate difference ($|c_1-c_2|$).
The state-related running cost rates corresponding to three factorial subsystems are given by:
$
	q_1(\bar{x}_1) = 3.5 \cdot (|r_2 - r_1| + |c_2-c_1|) + o(x_1)\cdot o(x_2),
    q_2(\bar{x}_2) = 3.5 \cdot (|r_2 - r_1| 
    + |c_2-c_1|) 
     + o(x_1) \cdot o(x_2) \cdot o(x_3),
    q_3(\bar{x}_3) = 3.5 \cdot o(x_2) \cdot o(x_3), 
$
with $o(\cdot)$ denoting the state value (50 for the obstacle states and 2.5 for the other states).

We consider how the composite control action from existing controllers generalizes to a new task. In the first component problem, the target states of UAV 1, 2 and 3 are assigned to be $[2;2]^\top$, $[2;2]^\top$, and $[4;5]^\top$, respectively. In the second component problem, the target states of UAV 1, 2 and 3 are assigned to be $[3;3]^\top$, $[3;3]^\top$, and $[5;4]^\top$. We apply the iterative solver approach in \cite{todorov2009compositionality} for solving the desirability function. We choose the $P$ matrix in composite weight computation as a diagonal matrix with all identical elements representing the equal kernel width case. The composition weights are designed as in \eqref{eq:discrete_weight}. Then the composite control actions computed by weighing on the existing controllers according to Theorem \ref{theorem_dis}, solve a new optimal control problem, with target states of $[2;3]^\top$, $[2;3]^\top$, and $[5;5]^\top$ assigned to UAV 1, 2 and 3. 

The trajectories for the UAV team following the optimal control actions in the component and composite problem settings are provided in Figures \ref{fig:comp_discrete}, where `S' denotes
the initial states and `T'
denotes
the terminal states. From  Figure  \ref{fig:example_discrete_composite}, we notice that the composite control action solves the new problem successfully following some initial back and forth steps, while achieving the goal of cost minimization without re-computation of the optimal control actions. From the execution trajectory of composite control in Figure \ref{fig:example_discrete_composite}, UAV 1 and UAV 2 try to fly together most of time, while avoiding the obstacles as desired. Meanwhile, the solution is achieved in far less computational complexity than running the iterative solver algorithm again for the new task.



\begin{figure}[!hb]

\begin{minipage}{.5\linewidth}
\centering
\subfloat[Execution trajectory of the UAV team in component problem 1.]{\label{main:a}\includegraphics[scale=.35]{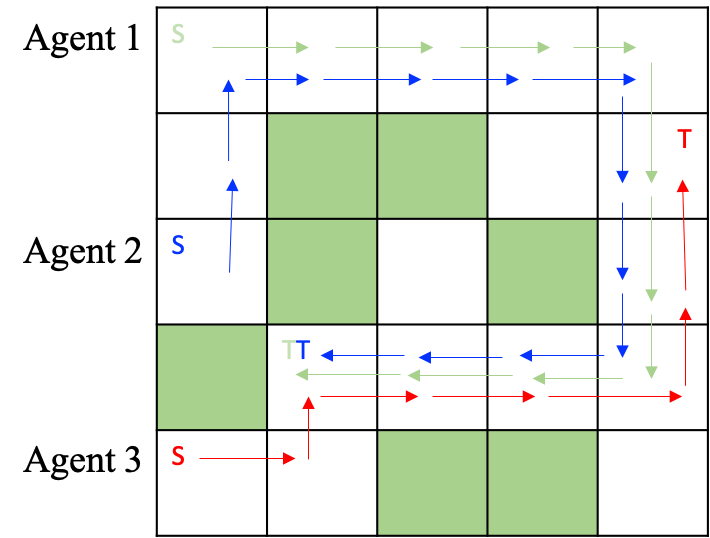}}
\end{minipage}%
\begin{minipage}{.5\linewidth}
\centering
\subfloat[Execution trajectory of the UAV team in component problem 2.]{\label{main:b}\includegraphics[scale=.35]{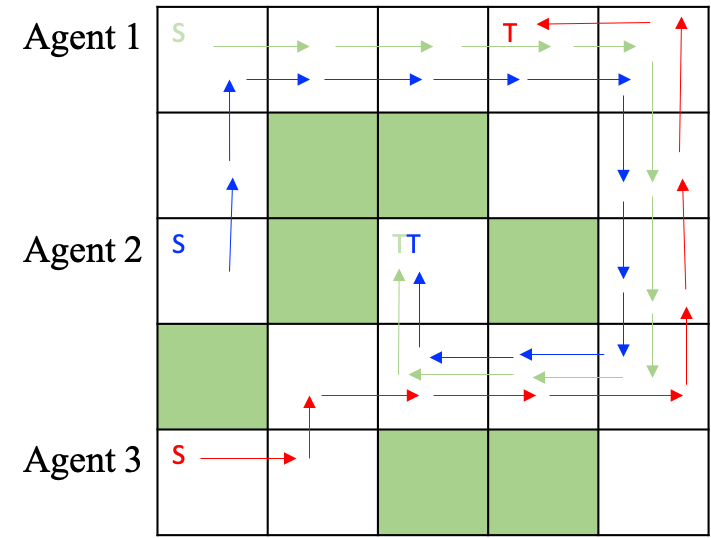}}
\end{minipage}\par\medskip
\centering
\vspace{-2mm}
\subfloat[Execution trajectory of the UAV team when the composite control law is applied in the new problem.]{\label{fig:example_discrete_composite}\includegraphics[scale=.4]{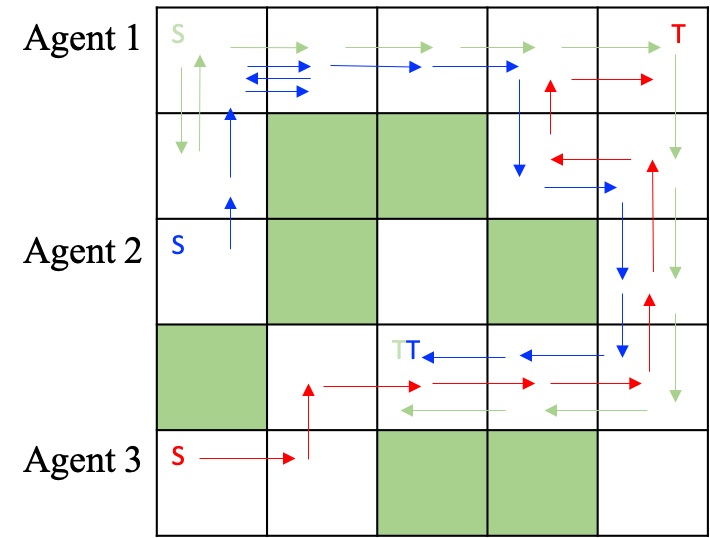}}

\caption{Demonstration of how component problem solutions generalize to a composite problem in discrete-time scenario.}
\label{fig:comp_discrete}
\end{figure}
\subsection{Continuous-time example of task generalization}
For continuous-time systems, we demonstrate our results in two different experimental settings. For both cases, we consider the UAV team with continuous-time dynamics described by the following equations as in \cite{wan2021distributed, yao2016hierarchical} :
	\begin{equation}
	\setlength\arraycolsep{1pt}
	{\scalefont{0.65} \left(\begin{matrix}
		dx_i\\
		dy_i\\
		dv_i\\
		d\varphi_i
		\end{matrix}\right) = 
		\left(\begin{matrix}
		v_i \cos \varphi_i\\
		v_i \sin \varphi_i\\
		0\\
		0
		\end{matrix}\right) dt +  \begin{pmatrix}
		0 & 0\\
		0 & 0\\
		1 & 0\\
		0 & 1
		\end{pmatrix} \left[ \left( \begin{matrix}
		u_i\\
		\omega_i
		\end{matrix}  \right) dt + \begin{pmatrix}
		\sigma_i & 0\\
		0 & \nu_i
		\end{pmatrix}dw_i
		\right], }
	\end{equation}
where $(x_i,y_i),v_i,\varphi_i$ denote the position coordinate, forward velocity and heading angle of the agent $i$, respectively, and $(x_i,y_i,v_i,\varphi_i)^\top$ is used as a state vector. Forward acceleration $u_i$ and angular velocity $\omega_i$ are the control inputs, and disturbance $w_i$ is a standard Brownian motion. We set the noise level parameters as $\sigma_i = 0.05$ and $v_i = 0.025$ throughout the simulation.

In both examples, the optimal solution to the component problems is achieved in a path-integral approximation framework. The $P$ matrix in the composite weight computation is chosen as an identity matrix representing the equal kernel width case. We compute the composite control law using the existing controllers according to Theorem \ref{theorem_cont}. 

\subsubsection{Example 1: All component problems only differ at the terminal cost}

In the first example, we consider the new problem holding the composition relationship on the final cost. We are interested in the final cost in the linear form of $
    h = \frac{d}{2}(|x-x_d|-c)+\alpha
$, and the three concerned component problems are designed with different sets of cost parameters $c, d$ and $\alpha$. The coordination between agents is considered in the running cost in the following form:
\begin{align}
    q_1(\bar{x}_1) &= 0.9\cdot (\|(x_1,y_1)-(x_1^{t_f},y_1^{t_f})\|_2-d_1^{\textrm{max}}) \\ 
    & \hspace{35pt}+ 1.5 \cdot(\|(x_1,y_1)-(x_2,y_2)\|_2-d_{12}^{\textrm{max}}),\nonumber\\
    q_2(\bar{x}_2)   &= 0.9\cdot(\|(x_2,y_2)-(x_2^{t_f},y_2^{t_f})\|_2-d_2^{\textrm{max}}) \\
    &\hspace{35pt} + 1.5\cdot(\|(x_2,y_2)-(x_1,y_1)\|_2-d_{21}^{\textrm{max}}),\nonumber\\
    q_3(\bar{x}_3) &= 0.9\cdot(\|(x_3,y_3)-(x_3^{t_f},y_3^{t_f})\|_2-d_3^{\textrm{max}}),
\end{align}
where $\|(x_i,y_i)-(x_i^{t_f},y_i^{t_f})\|_2$ calculates the distance to the goal position for UAV $i$, $\|(x_i,y_i)-(x_j,y_j)\|_2$ calculates the distance between UAV $i$ and UAV $j$,  $d_i^{\textrm{max}}$ denotes the distance between the initial position and target position for UAV $i$, and $d_{ij}^{\textrm{max}}$ denotes the initial distance between UAV $i$ and UAV $j$. These parameters can be tuned for improving  the stability and algorithm performance. 

For each component problem, one individual run is implemented towards given target states. For three UAVs with initial states $x_1^0 = (5,5,0.3,0)^\top$, $x_2^0 = (5,35,0.3,0)^\top$ and $x_3^0 = (5,20,0.3,0)^\top$, we want them to fly cooperatively towards the same target states of $(30,20,0,0)^\top$. The execution trajectories for the component problems are illustrated in Figure  \nolinebreak \ref{fig_cont_final_cost_differ_comp}, where `S' denotes
the initial states, and `T'
denotes
the terminal states. As the figure demonstrates, the trajectories are not obviously different from each other except noise, since the target states for different problems are identical, and the effect of terminal cost in obtaining the optimal control law is mild. The performance of executing the composite control actions on the new problem satisfying the final cost composition relationship in \eqref{eq:final_cost_relation} is given in Figure \ref{fig_cont_final_cost_differ_compos}.

\begin{figure}[!hb]
\centering
\includegraphics[width=2.4in]{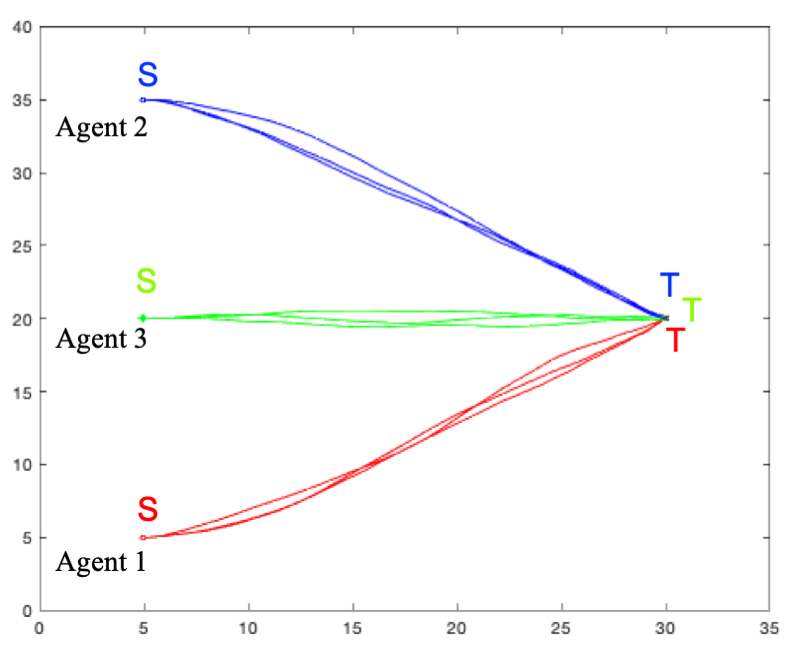}
\caption{Component problems for composition on the optimal control law when all problems differ only at the final cost (with lines in red, blue, and green denoting agents 1, 2, and 3, respectively).}
\label{fig_cont_final_cost_differ_comp}
\end{figure}
\begin{figure}[!hb]
\centering
\includegraphics[width=2.4in]{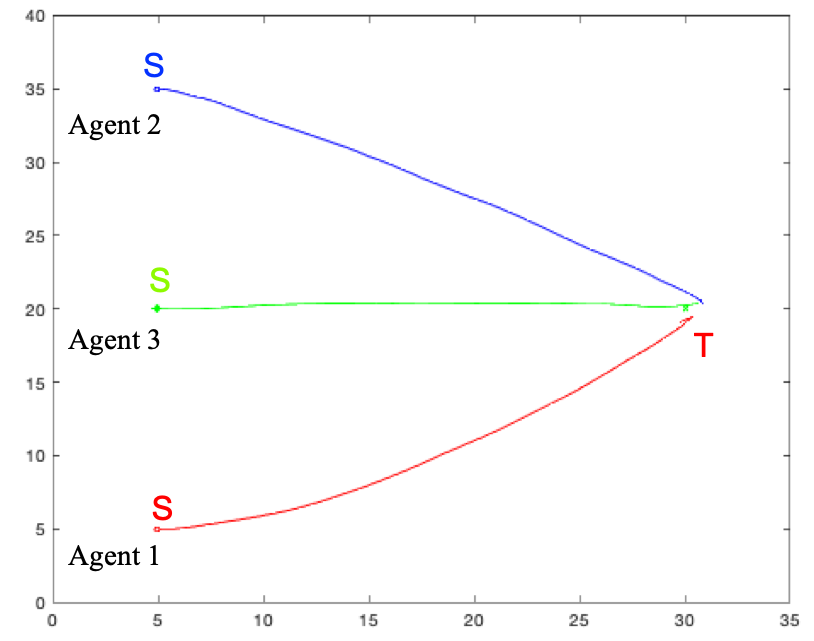}
\caption{Performance of applying the composite control actions directly computed from component control actions to the new problem with composite terminal cost.  }
\label{fig_cont_final_cost_differ_compos}
\end{figure}

\subsubsection{Example 2: All component problems differ both at the terminal cost and terminal states (generalization to new tasks)}
In this example, we consider the case when both terminal costs and terminal states differ among different component problems. The running costs and final costs follow the definition in Example 1, and the difference lies in the final state generalization. For the UAVs, the initial states are $x_1^0 = (10,10,0.3,0)^\top$, $x_2^0 = (10,30,0.3,0)^\top$ and $x_3^0 = (10,20,0.3,0)^\top$, and they are designed to fly cooperatively towards the same target: $x^{\{1\}} = (35,28,0,0)^\top$ and $x^{\{2\}} = (35,14,0,0)^\top$ in the two component problems; the trajectories are presented in Figure \ref{fig_cont_final_cost_final_state_differ_component}. The optimal control actions are obtained in the path integral framework introduced in \cite{theodorou2010generalized}.

\begin{figure}[h!]
\includegraphics[width=2.4in]{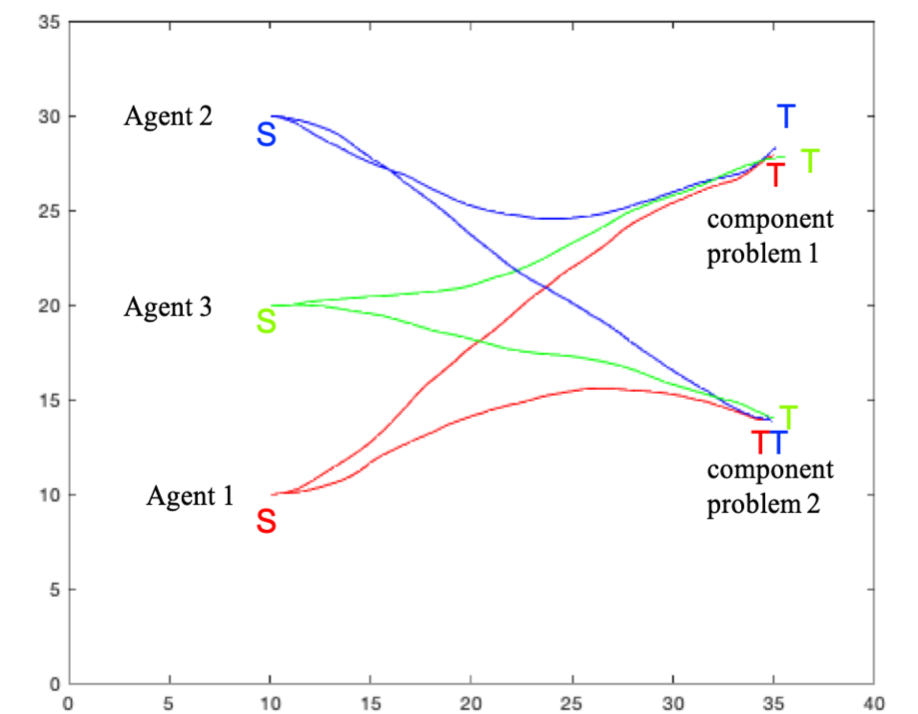}
\centering
\vspace*{-1mm}
\caption{Component problems targeted with different terminal states are considered for composition on the optimal control law (with lines in red, blue and green denoting agents 1, 2 and 3, respectively). These two component problems are set with target states of the upper and lower points, respectively. For each component problem, one individual run is implemented for composition purposes.}
\label{fig_cont_final_cost_final_state_differ_component}
\end{figure}

When the composite weights are calculated according to equation \eqref{eq:continuous_weight}, the composite control result can be generalized to a new task with target state of $x = (35,20,0,0)^\top$. The composite control actions associated with the new task are immediately achieved with existing controllers according to  Theorem \ref{theorem_cont}, and thus the approach provides a solution to a composite LSOC problem in a sample-efficient manner. The trajectory of the UAV 
team directly driven by the composite control law is shown in Figure \ref{fig_cont_final_cost_final_state_differ_composite}. 

\begin{figure}[h!]
\centering
\includegraphics[width=2.4in]{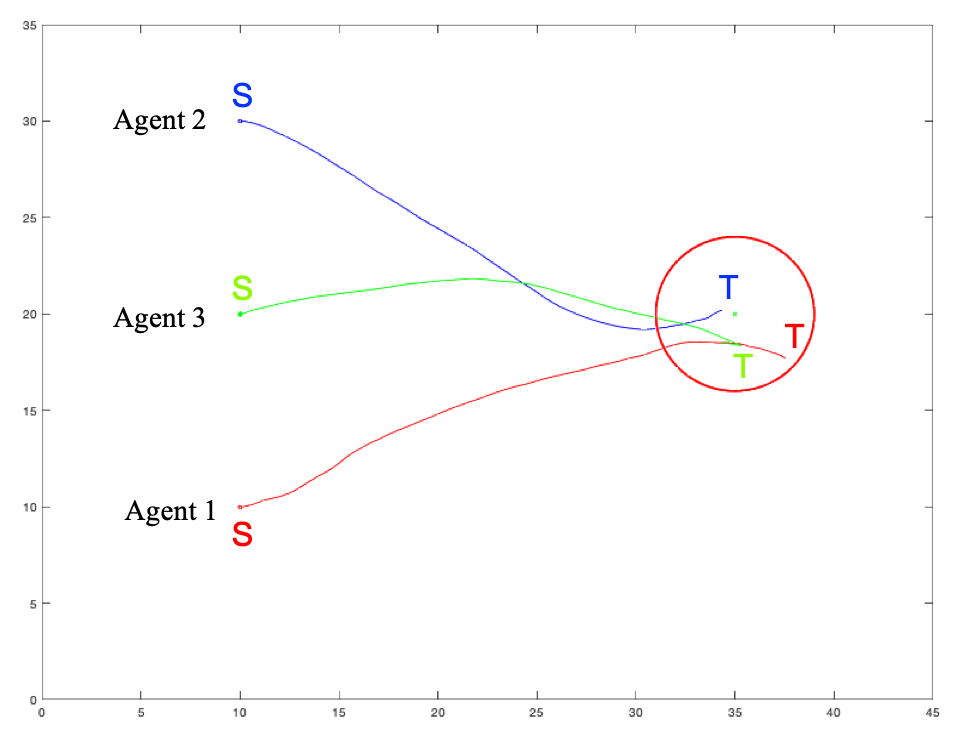}
\caption{Composite problem demonstrating the generalization on unlearned tasks (with lines in red, blue and green denoting agent 1, 2 and 3, respectively). The red circle centered at (35,20) (target coordinate for the UAV team) denotes the acceptable error region considering noise from the nature of stochasticity.}
\label{fig_cont_final_cost_final_state_differ_composite}
\end{figure}

In the discrete-time scenario, when the dimension of state space is finite, the exact composition on the value function is feasible, which enables the exact compositionality of optimal control actions. However, in the continuous-time  case, the state-space dimension is infinite. In this case   usually  the value function gets treated as a time-dependent function, and  the compositionality of optimal control law gets extended similarly to its counterpart in the discrete-time setting \cite{pan2015sample}. The disparity between the ending position and set targets in Figure \ref{fig_cont_final_cost_final_state_differ_composite} can be explained by the error introduced in applying this composite time-dependent control laws directly to the system in an offline manner. This issue in executing composite control laws directly may also be risky in 
high-dimensional single-agent problems except for fine-tuned parameters.  However, after parameter tuning and selecting optimal control actions independently for three factorial subsystems in multiple runs, the error can be restricted to a red circle shown in Figure \ref{fig_cont_final_cost_final_state_differ_composite} denoting the tolerance considering the presence of noise in the stochastic systems and the approximation error in  applied research. It is noteworthy that the result is achieved with far less computational complexity (away from resampling), which provides more attractive benefits as compared to the preciseness of control in some problem settings, when e.g. the component problem solution is obtained by sampling-based approach and is  computationally costly to achieve. 

\section{CONCLUSIONS AND FUTURE WORK}\label{5_sec}
Discrete-time and continuous-time distributed LSOC algorithms and composition theory on optimal control laws for networked MASs have been investigated in this paper. For both scenarios, existing controllers can be immediately composited to solve a new problem once certain weight conditions are satisfied on the final cost under the assumption of identical dynamics, cost rates, and set of interior states among all the component problems. 
The composite weights are computed via a square exponential kernel weight function measuring the target state differences,  and the composite control actions, achieved by weighing on the existing controllers, solve a new task in a sample-efficient manner.

There are some directions worthwhile for further investigation. For example, in the continuous-time task generalization setting, time-dependent control laws are proposed here, which makes it unrealistic to track control updates on time steps for composition, and this may lead to some  challenges as one faces when offline-computed control laws are directly applied to a new control problem. However, the error can be mitigated by using interpolation techniques on time. Meanwhile, for the computation of continuous-time optimal control laws, we applied the path-integral framework, and some approximation approaches such as random sampling estimator and Relative-Entropy Policy Search (REPS) algorithm are yet to be explored.

\section*{ACKNOWLEDGEMENTS}
This work is  supported by Air Force Office of Scientific Research (AFSOR), National Aeronautics and Space Administration (NASA) and National Science Foundation's National Robotics Initiative (NRI) and Cyber-Physical Systems (CPS) awards \#1830639, \#1932529 and \#1932288. 
\bibliography{reference}
\bibliographystyle{IEEEtran}

\addtolength{\textheight}{-12cm}   

\end{document}